\newtheorem{theorem}{Theorem}[section]
\newtheorem{definition}{Definition}[section]
\newcommand{\norm}[1]{\left\lVert#1\right\rVert}
\newcommand{\abs}[1]{\lvert#1\rvert} 
\newcommand{\rev}[1]{#1}
\title{Introducing the Huber mechanism for differentially private low-rank matrix completion}
\author{%
	R Adithya Gowtham* \quad Gokularam M* \quad Thulasi Tholeti* \quad Sheetal Kalyani\\
	Department of Electrical Engineering\\
	IIT Madras\\
	\texttt{\{ee17b146@smail,ee17d400@smail,ee15d410@ee,skalyani@ee\}.iitm.ac.in}\\
}
\begin{document}
	
	\maketitle
	
	\begin{abstract}
		Performing low-rank matrix completion with sensitive user data calls for privacy-preserving approaches. In this work, we propose a novel noise addition mechanism for preserving differential privacy where the noise distribution is inspired by Huber loss, a well-known loss function in robust statistics. The proposed Huber mechanism is evaluated against existing differential privacy mechanisms while solving the matrix completion problem using the Alternating Least Squares approach. We also propose using the Iteratively Re-Weighted Least Squares algorithm to complete low-rank matrices and study the performance of different noise mechanisms in both synthetic and real datasets. \rev{We prove that the proposed mechanism achieves $\epsilon$-differential privacy similar to the Laplace mechanism. Furthermore, empirical results indicate that the Huber mechanism outperforms Laplacian and Gaussian in some cases and is comparable, otherwise.} 
	\end{abstract}
	
	
	\section{Introduction}
	\footnotetext{*Equal contribution}
	Recovering a low-rank matrix from a set of limited observations is an important problem in machine learning and data analysis. It finds applications in multiple areas such as recommender systems \cite{luo2014efficient}, image restoration \cite{he2015total}, and phase retrieval (For a full survey, see \cite{nguyen2019low}). Its use in recommender systems was popularized by \cite{netflix} where the user fills in surveys regarding a small fraction of the items viewed and the system is expected to provide a recommendation by estimating the remaining entries of the matrix. 
	
	Many algorithms have been proposed in literature to solve this problem which typically involves matrix decomposition \cite{lu2015nonconvex,liu2013ftf}. In \cite{jain2012altmin}, the target matrix $\mathbf{X}$ is formulated as a bi-linear problem $\mathbf{X}=\mathbf{U}\mathbf{V}^\top$; the authors propose alternating minimization over the matrices $\mathbf{U}$ and $\mathbf{V}$. 
	The nuclear norm of the matrix is minimized in \cite{cai2008svt} in order to recover the matrix with the lowest rank. Singular value decomposition is performed on the target matrix, and a threshold is applied iteratively over the singular values to converge to a proven unique solution. Most of these solution methods involve iterative procedures as a part of the optimization algorithm that converges to a desired solution.
	
	Given that the user data, i.e., limited observations, available to the matrix completion algorithms are highly sensitive, there has been growing interest in providing privacy guarantees while handling such data.  Differential privacy was introduced by \cite{dwork2006privacy} as a method of preserving sensitive data about an individual while providing statistical information regarding the dataset as a whole. The framework has been widely adapted to provide private versions of well-established algorithms in varied fundamental areas such as clustering, gradient descent, localization, deep learning, data mining and many more (\cite{abadi2016deep,xia2020distributed,song2013stochastic,friedman2010data}).
	
	To that end, many differentially private low-rank matrix completion (LRMC) algorithms have been proposed (\cite{liu2015,jain2018differentially}) where provable privacy (and sometimes, performance) guarantees are provided. The differentially private version of alternating least squares is proposed in \cite{chien2021private} where noise is introduced in the optimization procedure to provide privacy guarantees. The algorithms mentioned above use a combination of trimming techniques (clamping entries of high magnitude) and Gaussian noise addition to provide an $(\epsilon,\delta)$-differential privacy guarantee which is a relaxed notion of privacy. 
	
	In this work, we explore the choice of using other noise addition mechanisms to achieve differentially private LRMC. We propose a novel addition mechanism, the Huber mechanism, that combines the advantages of the Gaussian and Laplace mechanisms and provide privacy guarantees for the proposed framework. We also implement the Alternating Least Squares algorithm with different noise addition mechanisms and compare their performance. In addition, we also propose a new optimization method inspired by the best way to post-process the noise added by the proposed mechanism so as to achieve accuracy without compromising on privacy. The contributions of our work are detailed as follows:
	\begin{itemize}
	    \item We introduce a differential privacy mechanism, the Huber mechanism, which adds noise from the Huber distribution, which is a combination of Laplace and Gaussian distributions. This mechanism is proposed to achieve the advantages of both Laplacian and Gaussian mechanisms. We prove that the proposed mechanism achieves $\epsilon$-differential privacy.
	    \item We propose two variations of differentially private low-rank matrix completion algorithms that use Alternating Least Squares (ALS) and Iteratively Re-weighted Least Squares (IRLS), where Huber noise is added to achieve privacy.
	    \item We provide extensive simulation results on synthetic as well as real datasets comparing the performance of the Huber mechanism to other standard noise addition mechanisms for differentially private LRMC.
	\end{itemize}
	
	\paragraph{Notation}
	Let $\mathbb{N}_p^{}$ be the set of first $p$ natural numbers, $\mathbb{N}_p^{}=\{1,\,2,\,\ldots,\,p\}$. $\mathcal{N}(\mu,\sigma^2_{})$ denotes the normal distribution with mean $\mu$ and variance $\sigma^2_{}$ and $\mathcal{L}(\beta)$ denotes the Laplace distribution with scale parameter $\beta$. Let $\Phi
	(\cdot)$ denote the 
	distribution function of the standard normal distribution. Throughout the article, matrices are denoted by uppercase bold letters and vectors are denoted by lowercase bold letters. 
	The $(i,\,j)$-th entry, the $i$-th row  (as a column vector)  and the $j$-th column of the matrix $\mathbf{B}\in \mathbb{R}^{p\times q}_{}$ are respectively denoted as $b_{ij}^{},\ \mathbf{b}_i^{}$ and $\widetilde{\mathbf{b}}_j^{},$ where $i\in\mathbb{N}_p^{}$ and $j\in\mathbb{N}_q^{}$.
	$\mathbf{B}_{\mathcal{I},\,\mathcal{J}}^{}$ denotes the sub-matrix of $\mathbf{B},$ formed with the entries $b_{ij}^{},\ i\in\mathcal{I}$ and $j\in\mathcal{J},$ where $\mathcal{I}$ and $\mathcal{J}$ are the index sets. 
	Similarly, for a vector $\mathbf{v}\in \mathbb{R}^{p}_{},\ v_{i}^{}$ denotes its $i$-th element, $i\in\mathbb{N}_p^{}$ and $\mathbf{v}_{\mathcal{I}}^{}$ denotes the sub-vector formed with the entries $v_{i}^{},\ i\in\mathcal{I}$. 
	Let $\operatorname{{Diag}}(\zeta_1^{},\,\zeta_2^{},\,\ldots,\,\zeta_p^{})$ denote the $p\times p$ diagonal matrix with the diagonal entries $\zeta_1^{},\,\zeta_2^{},\,\ldots,\,\zeta_p^{}$. The $\ell_p^{}$-norm of a vector $\mathbf{b}$ is denoted as $\norm{\mathbf{b}}_p^{}$ and the nuclear and Frobenius norms of the matrix $\mathbf{B}$ are denoted by $\norm{\mathbf{B}}_{*}^{}$ and $\norm{\mathbf{B}}_{\texttt{F}}^{}$ respectively. 
	
	
	\section{Differential privacy}
	
	\subsection{Background}
	Differential privacy aims to preserve the privacy of an individual while allowing meaningful inferences from the entire dataset. A privacy mechanism $\mathcal{M}$ is an algorithm that takes a data matrix as input and provides an output to a query. Ideally, the output provides accurate responses to the queries without compromising on individual data. Differential privacy is formally defined below and is reproduced from \cite{dwork2014algorithmic} for ease of reading.
	\begin{definition}
		A randomized mechanism $\mathcal{M}$ is said to preserve $\epsilon$-differential privacy if for all datasets $\mathcal{D},\ \mathcal{D}'_{} \in \mathcal{Z}$ that differ on a single element and for all possible sets $\mathcal{S},$
		\begin{equation*} \label{eqn:epriv}
		\operatorname{Pr}(\mathcal{M}(\mathcal{D})  \in \mathcal{S}) \leq \exp(\epsilon)\,  \operatorname{Pr}(\mathcal{M}(\mathcal{D}'_{})  \in \mathcal{S}).
		\end{equation*}
	\end{definition}
	A weaker notation of privacy is defined for cases when $\epsilon$ differential privacy is obtained for a major probability excluding a small fraction $\delta$. 
	\begin{definition}
		A randomized mechanism $\mathcal{M}$ is said to preserve $(\epsilon,\delta)$- differential privacy if for all datasets $\mathcal{D}$ and $\mathcal{D}'_{}$ that differ on a single element and for all possible sets $\mathcal{S},$
		\begin{equation*} \label{eqn:eprivd}
		\operatorname{Pr}(\mathcal{M}(\mathcal{D})  \in \mathcal{S}) \leq \exp(\epsilon)\, \operatorname{Pr}(\mathcal{M}(\mathcal{D}'_{})  \in \mathcal{S}) + \delta.
		\end{equation*}
	\end{definition}
	Sensitivity plays an important role in quantifying differential privacy. It measures the magnitude of change an output $f({\mathcal{D}})$ incurs based on the change of a single data point in the worst-case scenario. The amount of noise required to preserve privacy depends on the sensitivity. It is often defined with respect to a specific norm. Here, we formally define sensitivity with a general norm.
	
	\begin{definition}
	The $\ell_p^{}$-sensitivity of a function the function $f : \mathcal{Z}\rightarrow \mathbb{R}^K_{}$ is given by
		\begin{equation}
		\Delta f_p^{}=\max_{\mathcal{D},\,\mathcal{D}'_{}}
		\, \norm{f(\mathcal{D})-f(\mathcal{D}'_{})}_p^{},
	\end{equation}
	where $\mathcal{D,}\in\mathcal{Z} \text{ and } \mathcal{D}'_{}\in\mathcal{Z}$ are neighbouring datasets differing only by a single data entry.
	\end{definition}
	
	Differential privacy is typically achieved using the Gaussian or Laplacian mechanisms. As the names suggest, they involve the addition of Gaussian or Laplacian noise respectively. 
	Let $\mathbf{t}\in \mathbb{R}^K$ be the noise that is added to $f(\mathcal{D})$ to ensure privacy  i.e,
	\begin{equation*}
	    \mathcal{M}(\mathcal{D}, f(\cdot))=f(\mathcal{D})+\mathbf{t}.
	\end{equation*}
	In the Laplace mechanism, the coordinates of $\mathbf{t}$ are i.i.d samples from $\mathcal{L}(\beta)$ and it is shown to provide $\epsilon$-DP with $\epsilon=\Delta f_1/\beta$. Similarly in the Gaussian mechanism, the entries of $\mathbf{t}$ are i.i.d samples from $\mathcal{N}(0,\sigma^2_{})$ and it offers $(\epsilon,\delta)$-DP with $\epsilon=\sqrt{\frac{2}{\sigma^{2}}\log(1.25/\delta)}\,\Delta f_2$. 
	The choice of the privacy mechanism depends on the application and its specific requirements. Note that there is always a trade-off between accuracy and privacy. Higher privacy is achieved by adding noise of a larger magnitude and this, in turn, will affect the accuracy of the outcome. Although the Laplace mechanism offers a higher degree of privacy, the Gaussian mechanism is often preferred for machine learning applications due to its various advantages. The primary advantage is that for applications in which $\ell_2^{}$-sensitivity is much lower than $\ell_1^{}$-sensitivity in higher dimensions, the Gaussian mechanism allows adding much less noise. We now propose a new mechanism termed the Huber mechanism which aims to combine the advantages of Gaussian and Laplace mechanisms and employ it to perform differentially private low-rank matrix completion in the further sections.
	
	\subsection{Introducing the Huber mechanism}
	
	In this work, we introduce a new mechanism for differential privacy that combines the advantages of the Laplacian and Gaussian mechanisms. The noise distribution is inspired by the Huber loss function proposed in \cite{huber1992robust} and is defined as follows
	\begin{equation}\label{eqn:hub_pdf}
	p(t;\,\alpha) = \kappa_{\alpha}^{} \exp\left({-{\rho}_{\alpha}^{}(t)}\right), 
	\end{equation}
	where $\kappa_{\alpha}^{}$ is the normalizing constant given by 
	$\kappa_{\alpha}^{}
	=\left(\frac{2}{\alpha}\exp\left(\frac{-\alpha^2_{}}{2}\right) 
	+\sqrt{2\pi}
	\left(2\Phi(\alpha)-1\right)
	\right)^{-1}_{}$ and
	\begin{equation}\label{eqn:huber_loss}
	{\rho}_{\alpha}^{}(t) = \begin{cases}
	{t^2_{}}/{2}  &,\  \abs{t} \leq \alpha \\[2pt]
	\alpha\left(\abs{t} - {\alpha}/{2}\right)  &,\  \abs{t} > \alpha
	\end{cases} 
	\end{equation} 
	denotes the \textit{Huber loss function} with the transition parameter $\alpha$. 
	The derivative of the Huber loss function is known as the \textit{Huber influence function}, which is given by
	\begin{equation}\label{eqn:huber_inf}
	\psi_{\alpha}^{}(t) = \operatorname{sign}(t)\cdot\min(|t|,\,\alpha)=
	\begin{cases}
	-\alpha & ,\ \ t < -\alpha \\[2pt]
	t & ,\ \ -\alpha \leq t \leq \alpha\\[2pt]
	\alpha & ,\ \ t > \alpha
	\end{cases}.
	\end{equation}
	Huber distribution is symmetric with exponential tails and a Gaussian center. The parameter $\alpha$ offers flexibility to achieve the desired combination of Laplacian and Gaussian distributions. Given this unique property of the Huber distribution, we propose the Huber mechanism which adds Huber noise to assure privacy.
	
	\begin{definition}[Huber mechanism]
	Given a function $f :\mathcal{Z}\rightarrow\mathbb{R}^k_{}$, the Huber mechanism is defined as
	\begin{equation}
	\mathcal{M}_H^{}(\mathcal{D},\, f(\cdot)) = f(\mathcal{D}) + \mathbf{t},
	\end{equation}
	where $\mathbf{t}$ is a vector of $K$ i.i.d random variables distributed as \eqref{eqn:hub_pdf}. 
\end{definition}
Because of exponential tails of Huber distribution, the Huber mechanism achieves the stronger $\epsilon$-privacy, similar to the Laplacian. As the center of the distribution is like the Gaussian, for higher values of $\alpha,$ the noise samples will be similar 
to Gaussian noise. This, in turn, \rev{suggests} that composite noise addition using the Huber mechanism also allows the addition of much lesser noise when compared to the Laplacian. Next, we formally derive the privacy guarantees of the Huber mechanism.\\
	
	
	\begin{theorem}
		The Huber mechanism guarantees $\epsilon$-differential privacy with $\epsilon=\alpha\cdot\Delta f_1^{},$ where $\Delta f_1^{}$ is the $\ell_1$-sensitivity of the model.
	\end{theorem}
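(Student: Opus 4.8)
The plan is to mimic the classical density-ratio argument used for the Laplace mechanism. Write $\mathbf{a}=f(\mathcal{D})$ and $\mathbf{b}=f(\mathcal{D}'_{})$ for an arbitrary pair of neighbouring datasets. Since the coordinates of the added noise $\mathbf{t}$ are i.i.d.\ with density \eqref{eqn:hub_pdf}, the output $\mathcal{M}_H^{}(\mathcal{D})$ has density $p_{\mathcal{D}}^{}(\mathbf{z})=\prod_{k=1}^{K}\kappa_{\alpha}^{}\exp\bigl(-\rho_{\alpha}^{}(z_k^{}-a_k^{})\bigr)$, and likewise for $\mathcal{D}'_{}$. The normalizing constants $\kappa_{\alpha}^{}$ cancel in the ratio, so
\[
\frac{p_{\mathcal{D}}^{}(\mathbf{z})}{p_{\mathcal{D}'_{}}^{}(\mathbf{z})}=\exp\!\left(\sum_{k=1}^{K}\bigl(\rho_{\alpha}^{}(z_k^{}-b_k^{})-\rho_{\alpha}^{}(z_k^{}-a_k^{})\bigr)\right),
\]
and it suffices to bound the exponent by $\epsilon=\alpha\,\Delta f_1^{}$ uniformly in $\mathbf{z}$.

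The crux is a Lipschitz estimate for the Huber loss: $\abs{\rho_{\alpha}^{}(u)-\rho_{\alpha}^{}(v)}\le\alpha\,\abs{u-v}$ for all $u,v\in\mathbb{R}$. This follows from the fact that $\rho_{\alpha}^{}$ is differentiable with $\rho_{\alpha}'=\psi_{\alpha}^{}$, the Huber influence function \eqref{eqn:huber_inf}, which satisfies $\abs{\psi_{\alpha}^{}(s)}\le\alpha$ everywhere; writing $\rho_{\alpha}^{}(u)-\rho_{\alpha}^{}(v)=\int_{v}^{u}\psi_{\alpha}^{}(s)\,ds$ and taking absolute values gives the claim. (Alternatively one can verify it by splitting into the cases according to whether $u$ and $v$ lie inside or outside $[-\alpha,\alpha]$, but the influence-function argument is cleaner.)

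Applying this coordinatewise with $u=z_k^{}-b_k^{}$ and $v=z_k^{}-a_k^{}$, so that $\abs{u-v}=\abs{a_k^{}-b_k^{}}$, we obtain
\[
\sum_{k=1}^{K}\bigl(\rho_{\alpha}^{}(z_k^{}-b_k^{})-\rho_{\alpha}^{}(z_k^{}-a_k^{})\bigr)\le\alpha\sum_{k=1}^{K}\abs{a_k^{}-b_k^{}}=\alpha\,\norm{f(\mathcal{D})-f(\mathcal{D}'_{})}_1^{}\le\alpha\,\Delta f_1^{}.
\]
Hence the density ratio is at most $\exp(\alpha\,\Delta f_1^{})$ pointwise, and integrating over any measurable set $\mathcal{S}$ yields $\operatorname{Pr}(\mathcal{M}_H^{}(\mathcal{D})\in\mathcal{S})\le\exp(\alpha\,\Delta f_1^{})\,\operatorname{Pr}(\mathcal{M}_H^{}(\mathcal{D}'_{})\in\mathcal{S})$, which is precisely $\epsilon$-differential privacy with $\epsilon=\alpha\,\Delta f_1^{}$.

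The main obstacle is establishing the Lipschitz bound on $\rho_{\alpha}^{}$ with the sharp constant $\alpha$; everything else is the standard template. It is also worth noting that $\abs{\psi_{\alpha}^{}}\le\alpha$ is tight (the bound is attained on the tails), so this argument cannot yield a constant smaller than $\alpha\,\Delta f_1^{}$, mirroring the tightness of the Laplace analysis.
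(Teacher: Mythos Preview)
Your proof is correct but proceeds by a genuinely different, and considerably shorter, route than the paper's own argument. The paper establishes the inequality $\rho_{\alpha}^{}(t+\Delta f_1^{})-\rho_{\alpha}^{}(t)\le\alpha\,\Delta f_1^{}$ by an exhaustive piecewise case analysis: it distinguishes the two regimes $\Delta f_1^{}\le 2\alpha$ and $\Delta f_1^{}>2\alpha$, and within each regime partitions the real line into five intervals (according to where $t$ and $t+\Delta f_1^{}$ fall relative to $\pm\alpha$), computes $g(t)=\rho_{\alpha}^{}(t+\Delta f_1^{})-\rho_{\alpha}^{}(t)$ explicitly on each piece, and then maximizes. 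You instead observe that $\rho_{\alpha}^{}$ is globally $\alpha$-Lipschitz because $\rho_{\alpha}'=\psi_{\alpha}^{}$ satisfies $\abs{\psi_{\alpha}^{}}\le\alpha$, which delivers the same bound in one line via the fundamental theorem of calculus. Your argument is cleaner, treats the $K$-dimensional product structure explicitly (the paper's proof as written is effectively scalar), and makes the tightness of the constant $\alpha$ immediate. The only thing the paper's longer computation buys is the exact closed form of $g$ on each interval, not just an upper bound; that extra information is not needed for the theorem as stated.
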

	\begin{proof}
		Given that $\Delta f_1^{}$ is the sensitivity of the model, we determine the ratio of Huber probabilities between a given point $t$ and $t+\Delta f_1^{}$. As $\Delta f_1^{}$ indicates the maximum distance between two data points in a model, this ratio is indicative of $\epsilon$ according to Definition \ref{eqn:epriv}. Note that this is also similar to the ratio considered to provide differential privacy guarantees for the Laplace mechanism in \cite{dwork2014algorithmic}.
		\begin{align*}
		\frac{p(t | \alpha)}{p(t + \Delta f_1^{} | \alpha)} & = \frac{\kappa_{\alpha}^{}\exp({-{\rho}_{\alpha}^{}(t)})} {\kappa_{\alpha}^{}\exp({-{\rho}_{\alpha}^{}(t + \Delta f_1^{})})}
	    = \exp({{\rho}_{\alpha}^{} (t + \Delta f_1^{}) - {\rho}_{\alpha}^{}(t)}) 
		\end{align*}
		Define $g(t)$ as
		$g(t) = {\rho}_{\alpha}^{} (t + \Delta f_1^{}) - {\rho}_{\alpha}^{}(t). $
		
		It is evident from the nature of the two functions ${\rho}_{\alpha}^{}(t + \Delta f_1^{})$ and ${\rho}_{\alpha}^{}(t)$ that $g(t)$ is piece-wise defined. We consider two cases based on the behaviour of the function: $\Delta f_1^{} \leq 2\alpha$ and  $\Delta f_1^{} > 2\alpha$. We derive the bounds for the two cases separately in Appendix \ref{sec:A1} and verify that the upper bound for $g(t)$ in both cases is given by $\alpha\cdot\Delta f_1^{}$.
	\end{proof}
	
	\paragraph{Privacy Analysis} 
	To compare the privacy obtained by using the various noise addition mechanisms, we take a look at the privacy budget of the mechanisms for noise of similar variance. We also assume that the sensitivity of all algorithms $\Delta f_1 = \Delta f_2 = \Delta f$, where we set $\Delta f$ to the maximum possible value, the range of all the datasets considered in the experiments section. Hence, we set $\Delta f = 5$ and tabulate the budget required for similar variance noise addition in Table ~\ref{tab:privacy}:
	\begin{table}[!htb]
		\centering
		\caption{Privacy budgets of the various mechanisms for set noise variance.}
		\begin{tabular}{| c | c | c | c |}\hline
			\textbf{Noise Variance}
			&  \textbf{Gaussian Mechanism} & \textbf{Laplacian Mechanism}
			& \textbf{Huber Mechanism}
			\\ \hline
			 $1$ & $(15.964, 10^{-5})$ & $(7.071, 0)$ & $(15.000, 0)$ \\
			 $2$ & $(11.288, 10^{-5})$ & $(5.000, 0)$ & $(5.382, 0)$ \\
			 $3$ & $(9.217, 10^{-5})$ & $(4.082, 0)$ & $(4.235, 0)$ \\
			 $4$ & $(7.982, 10^{-5})$ & $(3.536, 0)$ & $(3.602, 0)$ \\
			\hline
		\end{tabular}
		\label{tab:privacy}
	\end{table}
	
    \rev{From Table ~\ref{tab:privacy}, for the same noise variance, we can see that the budget for the Huber mechanism is very close to the Laplace mechanism, especially for higher values of variance. Even with the relaxed definition of privacy, the Gaussian mechanism demands a huge privacy budget, making it a less preferable choice for applications with stricter budgets. The Huber mechanism provides a middle ground between the Laplacian and Gaussian mechanisms w.r.t accuracy and privacy.}
	
	We will now proceed to the application of the Huber mechanism to achieve differentially private LRMC. 
	
	\section{Differentially private LRMC using Huber mechanism}
	Having introduced the Huber mechanism for differential privacy, we demonstrate its utility in the area of matrix completion. We aim to introduce privacy to the iterative optimization process, thereby resulting in a differentially private LRMC algorithm. The choice of privacy mechanism is critical, especially in an algorithm involving iterative operations. 
	Huber noise combines the advantage of Gaussian noise that works well with optimization procedures as most of them are tuned for a normal spread of inherent noise in the data, with the low privacy budget of Laplacian noise to give us a new desirable noise hybrid. 
	
	\subsection{Alternating least squares}
	Let $\mathbf{X} \in \mathbb{R}^{m \times n}_{}$ be the sensitive data matrix that is partly filled. 
	Let $\Omega\subseteq\mathbb{N}_m^{}\times\mathbb{N}_n^{}$ be the set indices in which $\mathbf{X}$ has complete entries. \rev{$\mathcal{P}_{\Omega}^{}(\cdot)$ denotes the mask function; if $(i,j)\in\Omega$, $[\mathcal{P}_{\Omega}^{}(\mathbf{X})]_{i,j}^{}=x_{ij}^{}$, otherwise $[\mathcal{P}_{\Omega}^{}(\mathbf{X})]_{i,j}^{}=0$.} 
	As suggested by \cite{candes2010matrix}, to estimate the incomplete entries in $\mathbf{X},$ we approximate it to a low-rank matrix, $\mathbf{Z}$ and minimize over all such matrices. 
	\begin{equation*}
	\min_{\mathbf{Z}}\, \mathcal{L}(\mathcal{P}_{\Omega}^{}(\mathbf{X}-\mathbf{Z}))+\lambda\norm{\mathbf{Z}}_{*}^{}.
	\end{equation*}
	Here, $\mathcal{L}(\cdot)$ is the data fidelity loss function. This optimization ensures that the existing/complete entries of the matrix $\mathbf{X}$ and the low-rank approximation $\mathbf{Z}$ are similar. 
	
	Despite being convex, the nuclear norm makes the problem difficult to solve. To reduce the computational burden, $\mathbf{Z}$ is bilinearly factorized as $\mathbf{U}\mathbf{V}^{\top}_{}$ where $\mathbf{U}\in\mathbb{R}^{m \times r}_{}$ and $\mathbf{V}\in\mathbb{R}^{n \times r}_{}$, $r\ll \min(m,n)$ and the nuclear norm regularization is reduced to Frobenius-norm regularization in $\mathbf{U}$ and $\mathbf{V}$. 
	\begin{equation*}
	\min_{\mathbf{U},\,\mathbf{V}}\, \mathcal{L}(\mathcal{P}_{\Omega}^{}(\mathbf{X}-\mathbf{U}\mathbf{V}^{\top}_{}))+\frac{\lambda}{2}\left(\norm{\mathbf{U}}_{\texttt{F}}^{2}+\norm{\mathbf{U}}_{\texttt{F}}^{2}\right).
	\end{equation*}
	Though this problem is not convex, it is biconvex and typically solved using Alternating minimization. When $\mathcal{L}(\cdot)=\norm{\cdot}_{\texttt{F}}^2,$ the alternating minimization corresponds to Alternating Least Squares (ALS) (\cite{hastie2015matrix}), which can be decomposed over each row $\mathbf{u}_i^{}$ of $\mathbf{U}$ (and over each row $\mathbf{v}_i^{}$ of $\mathbf{V}$). ALS is found to be faster than the nuclear norm regularized problem.  The minimization can be described as 
	\begin{equation*}
	\min_{\mathbf{U},\,\mathbf{V}}\, \norm{\mathcal{P}_{\Omega}^{}(\mathbf{X}-\mathbf{U}\mathbf{V}^{\top}_{})}^2_{\texttt{F}} + \lambda \left(\norm{\mathbf{U}}^2_{\texttt{F}} + \norm{\mathbf{V}}^2_{\texttt{F}}\right).
	\end{equation*}
	For a given $\mathbf{U}$, $\widehat{\mathbf{v}}_j^{}$ can be obtained through the following optimization problem. 
\begin{equation*}
\begin{gathered}
	\min_{\mathbf{v}_j^{}}\,\frac{1}{2}\,\big\|\mathcal{P}_{\widetilde{\Omega}_j^{}}^{}(\widetilde{\mathbf{x}}_j^{})
		-\mathbf{U}_{\widetilde{\Omega}_j^{},\,\mathbb{N}_r^{}}^{}\mathbf{v}_j^{}\big\|_2^{2}
		+\frac{\lambda}{2}\norm{\mathbf{v}_j^{}}^2_{2},
\end{gathered}
\end{equation*}	
which results in the estimate
\begin{equation*}
\begin{gathered}
	\widehat{\mathbf{v}}_j^{} \gets \Big(\big({\mathbf{U}}_{\widetilde{\Omega}_j^{},\,\mathbb{N}_r^{}}^{}\big)^{\top}_{}{\mathbf{U}}_{\widetilde{\Omega}_j^{},\,\mathbb{N}_r^{}}^{}+\lambda\mathbf{I}_r\Big)^{-1}_{} \big({\mathbf{U}}_{\widetilde{\Omega}_j^{},\,\mathbb{N}_r^{}}^{}\big)^{\top}_{} \mathcal{P}_{\widetilde{\Omega}_j^{}}^{}(\widetilde{\mathbf{x}}_j^{}).
\end{gathered}
\end{equation*}	
Similarly, for a given $\mathbf{V}$, $\mathbf{u}_i^{}$ can be estimated through least squares. Thus, ALS updates the estimates $\widehat{\mathbf{U}}$ and $\widehat{\mathbf{V}}$ in an alternating fashion and is summarized in Algorithm \ref{alg:als}. 

Noise is also added to the ALS procedure for preserving privacy similar to \cite{jain2018differentially} and \cite{chien2021private}, as \textbf{it was observed during experimentation that adding noise to the output directly in each iteration did not allow the algorithm to converge leading to poor precision}. In the simulations section, we experiment with ALS by adding all three noise types and going over their pros and cons.

        \begin{algorithm}[ht] \label{alg:als}
		\KwInput{Incomplete data matrix $\mathbf{X},$ regularization parameter $\lambda,$ number of iterations $T_{ALS}^{}$.}
		\KwOutput{The completed matrix $\widehat{\mathbf{Z}}$}
		\vspace{1.5ex}Initialize $\widehat{\mathbf{U}}$ and $\widehat{\mathbf{V}}$ with random entries\\[1.5ex] 
		\For{ $T_{ALS}^{}$ iterations}{
			\For{$i\in\mathbb{N}_m^{}$}
		    {
			$\widehat{\mathbf{u}}_i^{} \gets \Big(\big(\widehat{\mathbf{V}}_{\Omega_i^{},\,\mathbb{N}_r^{}}^{}\big)^{\top}_{}\widehat{\mathbf{V}}_{\Omega_i^{},\,\mathbb{N}_r^{}}^{}+\lambda\mathbf{I}_r\Big)^{-1}_{} \big(\widehat{\mathbf{V}}_{\Omega_i^{},\,\mathbb{N}_r^{}}^{}\big)^{\top}_{} \mathcal{P}_{\Omega_i^{}}^{}(\mathbf{x}_i^{})$ 
		}
			\For{$j\in\mathbb{N}_n^{}$}{
			Generate noise vector $\mathbf{t}$\\
			
			$\widehat{\mathbf{v}}_j^{} \gets \Big(\big(\widehat{\mathbf{U}}_{\widetilde{\Omega}_j^{},\,\mathbb{N}_r^{}}^{}\big)^{\top}_{}\widehat{\mathbf{U}}_{\widetilde{\Omega}_j^{},\,\mathbb{N}_r^{}}^{}+\lambda\mathbf{I}_r\Big)^{-1}_{} \left( \big(\widehat{\mathbf{U}}_{\widetilde{\Omega}_j^{},\,\mathbb{N}_r^{}}^{}\big)^{\top}_{} \mathcal{P}_{\widetilde{\Omega}_j^{}}^{}(\widetilde{\mathbf{x}}_j^{}) + \mathbf{t}\right)$ 
		}
	}
		\textbf{return} {$\widehat{\mathbf{Z}}=\widehat{\mathbf{U}}\widehat{\mathbf{V}}^{\top}_{}$}
		\caption{Noisy Alternating Least Squares (ALS).}
	\end{algorithm}

    \subsection{IRLS for Huber mechanism}

    In the Huber mechanism, we add Huber noise which has a heavier tail compared to the Gaussian. The Huber loss has been used in robust statistics to make the estimator less sensitive to large deviations. We now use the Huber loss as the data fidelity loss i.e., $\mathcal{L}(\cdot)$ to be chosen as ${\rho}_{\alpha}^{}(\cdot)$ as defined in \eqref{eqn:huber_loss}. 
	Thus,
	\begin{equation}\label{eqn:opt}
	\min_{\mathbf{U},\,\mathbf{V}}\, {\rho}_{\alpha}^{}\!\left(\mathcal{P}_{\Omega}^{}(\mathbf{X}-\mathbf{U}\mathbf{V}^{\top}_{})\right) + \lambda \left(\norm{\mathbf{U}}^2_{\texttt{F}} + \norm{\mathbf{V}}^2_{\texttt{F}}\right).
	\end{equation}
	Similar to ALS, we solve \eqref{eqn:opt} through alternating minimization which can be decomposed over each row of $\mathbf{U}$ and $\mathbf{V}$ to avoid complex matrix computations. \rev{Each of the alternating minimization steps involves a Huber loss minimization with a Frobenius norm regularization term,
	\begin{subequations}
	\begin{align}
	    \widehat{\mathbf{U}} \gets \underset{{\mathbf{U}}}{\operatorname{argmin}}\, {\rho}_{\alpha}^{}\!\left(\mathcal{P}_{\Omega}^{}(\mathbf{X}-\mathbf{U}\widehat{\mathbf{V}}^{\top}_{})\right) + \lambda \norm{\mathbf{U}}^2_{\texttt{F}},\label{eq:irls_a}
	    \\
	    \widehat{\mathbf{V}} \gets \underset{{\mathbf{V}}}{\operatorname{argmin}}\,
	    {\rho}_{\alpha}^{}\!\left(\mathcal{P}_{\Omega}^{}(\mathbf{X}-\widehat{\mathbf{U}}\mathbf{V}^{\top}_{})\right) + \lambda \norm{\mathbf{V}}^2_{\texttt{F}}.\label{eq:irls_b}
	\end{align}
	\end{subequations}
	In robust statistics, Iterative Re-weighted Least Squares (IRLS) (\cite{holland1977robust}) is used whenever the optimization involves Huber loss. IRLS is an iterative estimator which is regaining popularity recently (\cite{kummerle2021iteratively}). The works \cite{dollinger1991influence,kalyani2007mse} show that the IRLS procedure approaches the maximum likelihood estimator for Huber loss under certain conditions.	We solve \eqref{eq:irls_a} and \eqref{eq:irls_b} through IRLS, which we term as Regularized Iterative Re-weighted Least Squares (R-IRLS)}
	
	To derive the step-wise iterations for R-IRLS, we equate the gradient of the loss function with respect to $\mathbf{u}_i^{}$ to $\mathbf{0},$  $-\mathbf{V}\mathbf{W}_i^{}(\mathbf{x}_i^{}-\mathbf{V} \mathbf{u}_i^{})+\lambda\,\mathbf{u}_i^{}=\mathbf{0},$
	where 
	\begin{align*}
	\mathbf{W}_i^{} &=\operatorname{{Diag}}\left({w}_{i1}^{},\,{w}_{i2}^{},\,\ldots,\,{w}_{in}^{}\right),\ \quad {w}_{ij}^{}=\frac{\psi_{\alpha}^{}(x_{ij}^{}-\mathbf{v}_j^{\top} \mathbf{u}_i^{})} {x_{ij}^{}-\mathbf{v}_j^{\top} \mathbf{u}_i^{}}
	\end{align*}
	%
	%
	and $\psi_{\alpha}^{}(\cdot)$ is the {Huber influence function} defined in \eqref{eqn:huber_inf}. 
	This is solved through the IRLS algorithm, proposed by \cite{holland1977robust} where $\mathbf{u}_i^{}$ is computed iteratively with intermediate estimates $ \widehat{\mathbf{u}}_i^{}$ and weights $\mathbf{W}_i^{}$. Similarly, rows of $\mathbf{V}$ can also be computed using R-IRLS when $\mathbf{U}$ is fixed. Thus, \eqref{eqn:opt} is solved using Alternating Minimization:\\
	
	\begin{equation} \label{eqn:irls}
	\begin{gathered}
	\mathbf{u}_i^{}\gets \operatorname{R-IRLS}\big(\mathcal{P}_{\Omega_i^{}}^{}(\mathbf{x}_i^{}),\, \mathbf{V}_{\Omega_i^{},\,\mathbb{N}_r^{}}^{},\, \lambda\big)
	\quad \forall i\in\mathbb{N}_m^{}\\
	\mathbf{v}_j^{}\gets \operatorname{R-IRLS}\big(\mathcal{P}_{\widetilde{\Omega}_j^{}}^{}(\widetilde{\mathbf{x}}_j^{}),\, \mathbf{U}_{\widetilde{\Omega}_j^{},\,\mathbb{N}_r^{}}^{},\, \lambda\big)
	\quad \forall j\in\mathbb{N}_n^{}    
	\end{gathered}
	\end{equation}
	
	
	Recall that the $i$-th row of $\mathbf{X}$ is denoted by $\mathbf{x}_i^{},$ which corresponds to the data of the $i$-th user and $\widetilde{\mathbf{x}}_i^{}$ denotes the $i$-th column. As we are interested in preserving the privacy of the user, we choose to add noise to that particular R-IRLS iteration which deals with the rows of the data matrix instead of the columns. This ensures that privacy is preserved over the item embeddings $\mathbf{V}$. Noise could be added to both sets of iterations; but since there is no significant privacy gained by adding noise to the columns of $\mathbf{X}$, we instead opt for noise addition to just one of the variables. \rev{Thus, no noise is added while updating $\widehat{\mathbf{u}}_i^{}$  and hence, we use least squares instead of R-IRLS to update $\widehat{\mathbf{u}}_i^{}$.} 
	The iterative steps for the proposed method are given in Algorithm \ref{alg:main}, which utilizes the function described in Algorithm \ref{alg:rirls}. 
	%
	
	\begin{algorithm}[ht] \label{alg:main}
		\KwInput{Incomplete data matrix $\mathbf{X},$ assumed rank $\mathbf{r}$, Huber transition parameter $\alpha,$ regularization parameter $\lambda,$ number of iterations $N$.}
		\KwOutput{The completed matrix $\widehat{\mathbf{Z}}$ }
		
		Initialize $\widehat{\mathbf{U}}$ and $\widehat{\mathbf{V}}$ with random entries\\
		\For{ $N$ iterations}{
			\For{$i\in\mathbb{N}_m^{}$}
			{
				$\widehat{\mathbf{u}}_i^{} \gets \Big(\big(\widehat{\mathbf{V}}_{\Omega_i^{},\,\mathbb{N}_r^{}}^{}\big)^{\top}_{}\widehat{\mathbf{V}}_{\Omega_i^{},\,\mathbb{N}_r^{}}^{}+\lambda\mathbf{I}_r\Big)^{-1}_{} \big(\widehat{\mathbf{V}}_{\Omega_i^{},\,\mathbb{N}_r^{}}^{}\big)^{\top}_{} \mathcal{P}_{\Omega_i^{}}^{}(\mathbf{x}_i^{})$ 
			}
			\For{$j\in\mathbb{N}_n^{}$}{
		$\widehat{\mathbf{v}}_j^{}\gets \operatorname{R-IRLS}\big(\mathcal{P}_{\widetilde{\Omega}_j^{}}^{}(\widetilde{\mathbf{x}}_j^{}),\, \widehat{\mathbf{U}}_{\widetilde{\Omega}_j^{},\,\mathbb{N}_r^{}}^{},\, \mathbf{r}, \, \alpha,\, \lambda\big)$ \hfill $\triangleright$ Update estimate 
			}
		}
		\textbf{return} {$\widehat{\mathbf{Z}}=\widehat{\mathbf{U}}\widehat{\mathbf{V}}^{\top}_{}$}
		\caption{IRLS+Huber}
	\end{algorithm} 
	
	\begin{algorithm}[ht] \label{alg:rirls}
		\KwInput{Targets $\mathbf{y},$ data matrix $\mathbf{A},$ assumed rank $\mathbf{r},$ Huber transition parameter $\alpha,$ regularization parameter $\lambda,$ number of iterations $K$.}
		\KwOutput{The IRLS estimate $\widehat{\boldsymbol{\theta}}$}
		\vspace{1.5ex}Initialize $\widehat{\boldsymbol{\theta}} \sim \mathcal{N}(0,1)$\\[1.5ex] 
		\For{ $K$ iterations}{
			$\mathbf{W} \gets \operatorname{{Diag}}\left(
				\frac{\psi_{\alpha}^{}\left(y_{1}^{}-\mathbf{a}_1^{\top} \widehat{\boldsymbol{\theta}}\right)} {y_{1}^{}-\mathbf{a}_1^{\top} \widehat{\boldsymbol{\theta}}},\,
				\ldots,\,
				\frac{\psi_{\alpha}^{}\left(y_{p}^{}-\mathbf{a}_p^{\top} \widehat{\boldsymbol{\theta}}\right)} {y_{p}^{}-\mathbf{a}_p^{\top} \widehat{\boldsymbol{\theta}}}\right)$ \hfill $\triangleright$ Update weights\\
				Generate noise vector $\mathbf{t}$ 
				\\
				$\widehat{\boldsymbol{\theta}} \gets \left(\mathbf{A}^{\top}_{}\mathbf{W}\mathbf{A}+\lambda \mathbf{I}_q^{}\right)^{-1}_{} \left(\mathbf{A}^{\top}_{}\mathbf{W}\mathbf{y} + \mathbf{t}\right)$ \hfill $\triangleright$ Update estimate 
	    }
		\textbf{return} {$\widehat{\boldsymbol{\theta}}$}
		\caption{$\operatorname{R-IRLS}\big(\mathbf{y},\, \mathbf{A},\, \mathbf{r}, \, \alpha,\, \lambda)$}
	\end{algorithm}

The IRLS + Laplacian method swaps out the Huber noise for Laplacian noise drawn from the distribution $\mathcal{L}(\Delta f/\epsilon)$. We also experiment with noisy IRLS in the simulation section and examine its comparative performance with noisy ALS.

	\section{Simulation Results}
	We present the empirical results for the Huber mechanism of noise addition for the problem of low-rank matrix completion 
	and compare it with the Gaussian and Laplace mechanisms optimized using the ALS and IRLS procedure.  We consider 3 different datasets: 
	\begin{enumerate}
	    \item A synthetic dataset in which we generate matrices of a specific rank as $\mathbf{X}=\mathbf{U}\mathbf{V}^{\top}_{}$ with $\mathbf{U}$ and $\mathbf{V}$ generated randomly. A percentage of the entries are sampled at random and replaced with zeros to indicate the incomplete entries.
	    \item The MovieLens100k dataset (\cite{movielens}). The dataset consists of approximately 1000 users and 100000 entries. The percentage of observable entries is $\sim 5\%$.
	    \item The Sweet Recommender System dataset (\cite{sweet_rs}) which contains 2000 users and over 45000 ratings ($\sim 40\%$ of entries are observable). 
	\end{enumerate}

	\begin{table}[ht]
		\centering
		\caption{Simulation parameters}
		\begin{tabular}{|c|c|}
			\hline
			Parameter & Value \\
			\hline
			Number of alternating steps $T$ & 50\\
			IRLS iterations $K$ & 20\\
			Regularization parameter $\lambda$ & 0.5\\
			Trials for averaging & 10\\
			Ratings range & 1 - 5\\
			\hline
		\end{tabular}
		\label{tab:sim_par}
	\end{table}

\begin{table}[!ht]
    \centering
    \caption{RMSE for synthetic dataset of rank \textbf{5}}
    \begin{tabular}{|c|c|c|c|c|c|c|}
    \hline
        \textbf{Variance} & \textbf{Observed fraction} & \textbf{Algorithm} & \textbf{Vanilla} & \textbf{Gaussian} & \textbf{Laplacian} & \textbf{Huber} \\ \hline
        \multirow{6}{*}{1} & \multirow{2}{*}{5\%} & ALS & 0.2040 & 0.3740 & 0.3701 & \textbf{0.3569} \\ \cline{3-7}
        &  & IRLS & 0.2039 & 0.3728 & 0.3939 & \textbf{0.3640} \\ \cline{2-7}
        & \multirow{2}{*}{10\%} & ALS & 0.0948 & 0.2008 & 0.2017 & \textbf{0.2002} \\ \cline{3-7}
         &  & IRLS & 0.0950 & 0.2000 & 0.2013 & \textbf{0.1995} \\ \cline{2-7}
         & \multirow{2}{*}{15\%} & ALS & 0.0567 & 0.1924 & 0.1926 & \textbf{0.1923} \\ \cline{3-7}
         &  & IRLS & 0.0567 & 0.1927 & 0.1924 & \textbf{0.1901} \\ \hline
        \multirow{6}{*}{2} & \multirow{2}{*}{5\%} & ALS & 0.2481 & 0.6388 & 0.6338 & \textbf{0.6261} \\ \cline{3-7}
         &  & IRLS & 0.2482 & 0.6370 & 0.6427 & \textbf{0.6194} \\ \cline{2-7}
         & \multirow{2}{*}{10\%} & ALS & 0.0981 & 0.2203 & 0.2209 & \textbf{0.2111} \\ \cline{3-7}
         &  & IRLS & 0.0982 & 0.2194 & 0.2208 & \textbf{0.2047} \\ \cline{2-7}
         & \multirow{2}{*}{15\%} & ALS & 0.0550 & 0.2018 & 0.2022 & \textbf{0.1950} \\ \cline{3-7}
         &  & IRLS & 0.0551 & 0.2016 & 0.2019 & \textbf{0.1947} \\ \hline
    \end{tabular}
    \label{tab:res_syn_5}
\end{table}
	The simulation parameters are tabulated in Table \ref{tab:sim_par}. \rev{The metric used for measuring the performance of the algorithms is Root Mean Squared Error (RMSE), which is determined as $ \norm{\mathbf{X}-\mathbf{U}\mathbf{V}^{\top}_{}}_\mathtt{F}/\sqrt{mn}$ (as considered in \cite{chien2021private}, \cite{liu2013ftf} and \cite{liu2015}).}
	
	For synthetic data, we consider several cases of noise variance, fraction of observed entries and rank of $\mathbf{X}$ to evaluate the performance of the various mechanisms using both the ALS and IRLS procedures to decide the best course of action to collect and compare results from real datasets. 
	Please note that the variance of Huber noise approaches 1 asymptotically as $\alpha \rightarrow \infty$. But, it decays rapidly and is approximately $1$ even at $\alpha = 3$, which is employed to generate Huber noise of unit variance.
	
		\paragraph{Synthetic datasets} We tabulate RMSE for synthetic data with various parameters (such as the variance of additive noise, percentage of observable entries and optimization procedure) in Table \ref{tab:res_syn_5}. The lowest RMSE in each case is provided in \textbf{bold}. From Table \ref{tab:res_syn_5}, there are two key observations. \rev{Firstly, Huber noise gives consistently better results across different variances and visible fractions. Recall from Table \ref{tab:privacy} (in page 4) that the Huber mechanism provides similar privacy to the Laplace mechanism for the same variance (for higher variance values). However, it seems to provide better accuracy for the same amount of noise being added as evidenced by Table ~\ref{tab:res_syn_5}. The Gaussian mechanism provides much lower privacy (greater value of $\epsilon$) but does not significantly reduce RMSE over the Huber and Laplace mechanisms. 
	Secondly, IRLS gives significant improvement over ALS with regards to Huber noise, especially for data with lesser observable fraction. This could be because IRLS is able to handle large deviations brought about by the Huber noise unlike ALS, without affecting privacy.}

It is also observed that an increase in the variance of added noise leads to an overall decrease in accuracy, as expected. However, the effect is heavily pronounced for data with a lesser fraction of observed entries. Moreover, this behaviour is more evident for Gaussian and Laplacian noise when compared to Huber noise which makes the Huber mechanism more preferable for adding noise of higher variance.

In the case of Huber noise, we observe that ALS gives a better performance than IRLS for lower variance and vice-versa for higher variances. This may occur due to the fact that the Huber and Gaussian distributions are quite similar at lower variances (larger $\alpha$).
\rev{In line with theoretical expectations, ALS gives better performance consistently for Gaussian noise. However, in most cases, IRLS gives better accuracy than ALS for Laplacian noise making the former more suitable for the Laplace mechanism. We also analyze the variation in performance with rank of the matrix $\mathbf{X}$; these results are provided in Appendix ~\ref{sec:add_syn}. We notice that with an increase in rank, Gaussian mechanism provides the lowest RMSE in a few cases although Huber mechanism gives the best performance overall. However, it is worth noting that the privacy guarantees for the Gaussian mechanism are much weaker than both Huber and Laplacian.}

\paragraph{Real datasets}
Based on the results from Synthetic data, we compare the results of ALS + Gaussian noise, IRLS + Laplacian noise and IRLS + Huber noise for the MovieLens dataset and SweetRS dataset in Tables \ref{tab:res_mov} and \ref{tab:res_sweet} respectively. No noise is added in the cases of baseline ALS and baseline IRLS and they are hence not private. 
Rank is set to be $32$ for both the datasets. For Movielens, $20$ iterations of ALS are performed and for SweetRS $T_{ALS}=100$. For SweetRS dataset, $40\%$ of ratings are available. However, for experiments where a lower number of observable entries are required, the entries are sub-sampled to produce the desired fraction.

\rev{For synthetic datasets, we provide comparisons across noise addition mechanisms for the same optimization procedure. Here, as we are comparing across optimization procedures as well, We note that performing 20 iterations of IRLS requires much higher computation time when compared to ALS. Therefore, we also provide results for a variation of IRLS that perform only 2 IRLS iterations. A key observation is that there is no significant increase in the RMSE when we limit to just two iterations of IRLS. Thus, reasonable accuracy is achieved with less complexity.}

\begin{table}[h]
		\centering
		\caption{RMSE for MovieLens100k, Visible fraction = 5.1\%}
		\begin{tabular}{| c | c | c | c | c | c | c |}
			\hline
			& \textbf{ALS}
			&\textbf{IRLS}
			&  \textbf{ALS + G} & \textbf{IRLS + L}
			& \textbf{IRLS + H}
			& \textbf{IRLS-2 + H}\\ \hline
			\textbf{MSE} & $1.2463$ & $1.2787$ & $1.3883$ & $1.3952$ & $\textbf{1.3755}$ & $1.3850$
			\\
			\hline
		\end{tabular}
		\label{tab:res_mov}
	\end{table}
\begin{table}[ht]
		\centering
		\caption{RMSE for SweetRS}
		\label{tab:res_sweet}
		\begin{tabular}{| c | c | c | c | c | c | c |}
			\hline
			\textbf{Visible fraction} & \textbf{ALS}
			&\textbf{IRLS}
			&  \textbf{ALS + G} & \textbf{IRLS + L}
			& \textbf{IRLS + H} & \textbf{IRLS-2 + H}
			\\ \hline
			$5\%$ & $2.1333$ & $2.1334$ & $2.2103$ & $2.2096$ & $\textbf{2.2022}$ & $2.2137$\\ \hline
			$10\%$ & $1.7103$ & $1.7102$ & $1.8830$ & $1.8827$ & $\textbf{1.8757}$ & $1.8820$ \\ \hline
			$15\%$ & $1.6881$ & $1.6891$ & $\textbf{1.7686}$ & $1.7756$ & $1.7694$ & $1.7791$
			\\
			\hline
		\end{tabular}
	\end{table}

\rev{Similar to the trend observed for synthetic data, we note that Huber mechanism results in the lowest RMSE for both MovieLens100k as well and SweetRS datasets when $5\%$ of the entries are visible. As the percentage of observable entries increases, Gaussian mechanism takes over. As mentioned earlier, Gaussian mechanism provides weaker privacy guarantees.}

	\section{Discussions}
	 
	We can observe that Laplacian noise gives satisfactorily accurate results for the low privacy budget that it requires. But the only drawback is that the large tail of the Laplacian distribution makes the results imprecise if consistent performance is important, and in most cases it is. Gaussian also works well, especially for datasets with a small observable fraction, but its large privacy budget works against its favour. Huber noise gives the best trade-off overall, with the double benefits of both high accuracy and a low privacy budget.
	
	Although Gaussian noise is preferred to achieve differentially private ALS, we observe from our simulation results that other noise mechanisms perform competitively while offering $\epsilon$-DP guarantee as compared to the $(\epsilon,\delta)$-DP guarantee provided by Gaussian. Therefore, we conclude that the choice of noise mechanism is not obvious and exploring other mechanisms is crucial in improving performance. 
	
	Though we have explored the efficacy of the Huber mechanism for differentially private matrix completion, we believe that the mechanism holds its own merits as a noise addition mechanism for differential privacy. \rev{While constructing private algorithms, typically there is a trade-off between accuracy and privacy. In this work, we have provided exact privacy guarantees of the Huber mechanism and empirically compared its accuracy with other noise mechanisms in the context of matrix completion. This is because the exact characterization of the accuracy of LRMC algorithms is difficult. It is worth exploring the merits of the Huber mechanism analytically and in the general context of differential privacy.}
	\medskip
	\bibliographystyle{unsrtnat}
	\bibliography{ref.bib}
	

\newpage
\appendix

\section{Appendix: Derivation of privacy for Huber mechanism} \label{sec:A1}
\paragraph{Case 1: $\Delta f \leq 2\alpha$} 
We go over all the intervals of $x$ in order and find the value of the function $g(x)$.

\begin{enumerate} \renewcommand{\labelenumi}{\roman{enumi}. }
    
    \item For $x < -\Delta f - \alpha$, $\abs{x}>\alpha$ and $\abs{x + \Delta f}>\alpha$ and both $x$ and $x + \Delta f$ are negative.
    \begin{equation*}
    \begin{split}
        g_1(x) 
        = \alpha\left(-x - \Delta f - \frac{\alpha}{2}\right) - \alpha\left(-x- \frac{\alpha}{2}\right)
        = -\alpha\,\Delta f.
    \end{split}
    \end{equation*}
    Since $g_1(x)$ is constant, the maximum of $g_1(x)$ in this range, $g_{1,{max}}=-\alpha\,\Delta f$.
    
    \item For $-\Delta f - \alpha \leq x \leq - \alpha$, $\abs{x}\geq\alpha$ whereas $\abs{x + \Delta f} \leq \alpha$.
    \begin{equation*}
    \begin{split}
        g_2(x) 
        = \frac{(x + \Delta f)^2}{2} - \alpha\left(-{x} - \frac{\alpha}{2}\right)
        = \frac{(x + \alpha + \Delta f)^2}{2} - \alpha\,\Delta f.
    \end{split}
    \end{equation*}
    The function is then monotonically increasing. Thus, 
    $g_{2,{max}}$, occurs at $x = -\alpha$, 
        $g_{2,{max}} = g_2(x) \big\rvert_{x = -\alpha} 
                = {\Delta f(\Delta f - 2\alpha)}/{2}
                \leq 0$
    , since $\Delta f \leq 2\alpha$ by the case definition.
    
    \item For $-\alpha < x \leq \alpha - \Delta f$, $\abs{x}\leq\alpha$ and $\abs{x + \Delta f}\leq\alpha$. Hence ,
    \begin{equation*}
    \begin{split}
        g_3(x) &= \frac{(x + \Delta f)^2}{2} - \frac{x^2}{2} = x\Delta f + \frac{\Delta f^2}{2} .
    \end{split}
    \end{equation*}
    As $g_3(x)$ is monotonically increasing, 
        $g_{3,{max}} = g_3(x) \big\rvert_{x = \alpha - \Delta f}
                = \alpha\,\Delta f - \frac{\Delta f^2}{2}
                \leq \alpha\,\Delta f.$
    
    \item For $\alpha - \Delta f < x \leq \alpha$, $\abs{x}\leq \alpha$ whereas $\abs{x + \Delta f} >\alpha$. Now, $x + \Delta f>0$. So, 
    \begin{equation*}
    \begin{split}
        g_4(x) 
        = \alpha\left(x + \Delta f - \frac{\alpha}{2}\right) - \frac{x^2}{2}
            = \alpha\,\Delta f - \frac{1}{2}(x-\alpha)^2.
    \end{split}
    \end{equation*}
    The maximum value in this range occurs at $x = \alpha$. 
    Hence, 
        $g_{4,{max}} = \alpha\,\Delta f.$
    \item For $x > \alpha$, both $\abs{x}>a$ and $\abs{x + \Delta f}>\alpha$, both $x$ and $x + \Delta f$ are positive.
    \begin{equation*}
    \begin{split}
        g_5(x) 
        = \alpha\left(x + \Delta f - \frac{\alpha}{2}\right) - \alpha\left(x - \frac{\alpha}{2}\right) = \alpha\,\Delta f.
    \end{split}
    \end{equation*}
    Since the value of $g_5(x)$ comes out to be a constant, 
    $g_{5,{max}}=\alpha\,\Delta f$. \\
\end{enumerate}
Thus, the overall upper bound $g_{max}$ for Case 1 can be computed as
\begin{equation*}
    \begin{split}
        g_{max} = \max_{i=1,\ldots,5}\, g_{i,{max}} 
        = \alpha\,\Delta f
    \end{split}
    \end{equation*}

\paragraph{ Case 2: $\Delta f > 2\alpha$} 
Similar to case 1, we compute the values for $g(x)$ for different intervals of $x$.

\begin{enumerate} \renewcommand{\labelenumi}{\roman{enumi}. }
    
    \item For $x < -\Delta f - \alpha$, $g_1(x)$ and $g_{1,{max}}$ are identical to those in Case 1.
    
    \item For $-\Delta f - \alpha \leq x \leq \alpha - \Delta f$, $\abs{x}$ remains greater than $\alpha$ whereas $\abs{x + \Delta f} \leq \alpha$. 
    \begin{equation*}
    \begin{split}
         g_2(x) &= \frac{(x + \Delta f)^2}{2} - \alpha\left(\abs{x} - \frac{\alpha}{2}\right)
         = \frac{(x + \alpha + \Delta f)^2}{2} - \alpha  \Delta f.
    \end{split}
    \end{equation*}
    The maximum value of $g_2(x)$ occurs at $x = \alpha - \Delta f$. 
    Hence, 
        $g_{2,{max}} = g_2(x) \big\rvert_{x = \alpha - \Delta f}
        = \alpha(2\alpha - \Delta f)
        \leq 0$
    , since $\Delta f > 2\alpha$ by the case definition.
    
    \item For $\alpha - \Delta f < x \leq -\alpha$, $\abs{x}\geq \alpha$ and $\abs{x + \Delta f} > \alpha$. Also, note that $x<0$ but $x + \Delta f>0$.
    \begin{equation*}
    \begin{split}
         g_3(x) 
         = \alpha\left(x + \Delta f - \frac{\alpha}{2}\right) - \alpha\left(-x - \frac{\alpha}{2}\right)
         = \alpha(2x + \Delta f).
    \end{split}
    \end{equation*}
    The maximum value 
    occurs at $x = -\alpha$, 
    $    g_{3,{max}} = g_3(x) \big\rvert_{x = -\alpha}
        = \alpha\,\Delta f - 2\alpha^2
        \leq \alpha\,\Delta f
    $
    .
    
    \item For $-\alpha < x \leq \alpha$, $\abs{x} \leq \alpha$. So,
    \begin{equation*}
    \begin{split}
        g_4(x) &= \alpha\left(\abs{x + \Delta f} - \frac{\alpha}{2}\right) - \frac{x^2}{2}
        = \alpha\left(x + \Delta f - \frac{\alpha}{2}\right) - \frac{x^2}{2}
        = \alpha\,\Delta f - \frac{1}{2}(x-\alpha)^2.
    \end{split}
    \end{equation*}
    The maximum value 
    occurs at $x = \alpha$, 
      $g_{4,{max}} = \alpha\,\Delta f.$
    \item For $x > \alpha$ too, $g_5(x)$ and $g_{5,{max}}$ are identical to those in Case 1.
\end{enumerate}
The overall upper bound $g_{max}$ for Case 2 is
\begin{equation*}
    \begin{split}
    g_{max} = \max_{i=1,\ldots,5}\,g_{i,max}
    = \alpha\,\Delta f.
\end{split}
    \end{equation*}
	
\section{Appendix: Additional results for Synthetic data} \label{sec:add_syn}

We present the results for the synthetic data when the rank of $\mathbf{X}$ is set to $10$ and $20$ in Tables ~\ref{tab:res_syn_10} and ~\ref{tab:res_syn_20} respectively. When compared to Table ~\ref{tab:res_syn_5}, these results show an increase in MSE owing to the increased complexity of a higher rank structure. However, the rest of the trends are similar to those observed in the rank $5$ data. 


\begin{table}[!ht]
    \centering
    \caption{Results for Synthetic dataset of rank \textbf{10}}
    \begin{tabular}{|c|c|c|c|c|c|c|}
    \hline
        \textbf{Variance} & \textbf{ Observed fraction} & \textbf{Algorithm} & \textbf{Vanilla} & \textbf{Gaussian} & \textbf{Laplacian}& \textbf{Huber} \\ \hline
        \multirow{6}{*}{1} & \multirow{2}{*}{5\%} & ALS & 0.4058 & 0.6471 & \textbf{0.6452} & 0.6720 \\ \cline{3-7}
         &  & IRLS & 0.4059 & 0.6376 & 0.6888 & \textbf{0.6479} \\ \cline{2-7}
         & \multirow{2}{*}{10\%} & ALS & 0.2317 & \textbf{0.3015} & 0.3029 & 0.3016 \\ \cline{3-7}
         &  & IRLS & 0.2317 & \textbf{0.3023} & 0.3037 & 0.3031 \\ \cline{2-7}
         & \multirow{2}{*}{15\%} & ALS & 0.1602 & 0.2770 & 0.2774 & \textbf{0.2768} \\\cline{3-7}
         & ~ & IRLS & 0.1596 & 0.2773 & 0.2771 & \textbf{0.2767} \\ \hline
        \multirow{6}{*}{2} & \multirow{2}{*}{5\%} & ALS & 0.3971 & 0.6454 & 0.6674 & \textbf{0.6361} \\ \cline{3-7}
         &  & IRLS & 0.3972 & 0.6835 & 0.7347 & \textbf{0.6646} \\ \cline{2-7}
        ~ & \multirow{2}{*}{10\%} & ALS & 0.2288 & \textbf{0.3075} & 0.3086 & 0.3098 \\ \cline{3-7}
        ~ & ~ & IRLS & 0.2317 & 0.3086 & 0.3080 & \textbf{0.3040} \\ \cline{2-7}
        ~ & \multirow{2}{*}{15\%} & ALS & 0.1599 & 0.2836 & 0.2839 & \textbf{0.2802} \\ \cline{3-7}
        ~ & ~ & IRLS & 0.1602 & 0.2822 & 0.2838 & \textbf{0.2799} \\ \hline
    \end{tabular}
    \label{tab:res_syn_10}
\end{table}


\begin{table}[!ht]
    \centering
    \caption{Results for Synthetic dataset of rank \textbf{20}}
    \begin{tabular}{|c|c|c|c|c|c|c|}
    \hline
    \textbf{Variance} & \textbf{Observed fraction} & \textbf{Algorithm} & \textbf{Vanilla} & \textbf{Gaussian} & \textbf{Laplacian}& \textbf{Huber} \\ \hline
        \multirow{6}{*}{1} & \multirow{2}{*}{5\%} & ALS & 0.6902 & 1.5165 & 1.5330 & \textbf{1.5142} \\ \cline{3-7}
        ~ & ~ & IRLS & 0.6886 & 1.5493 & 1.5715 & \textbf{1.5191} \\ \cline{2-7}
        ~ & \multirow{2}{*}{10\%} & ALS & 0.4622 & 0.5087 & 0.5113 & \textbf{0.5083} \\ \cline{3-7}
        ~ & ~ & IRLS & 0.4600 & \textbf{0.5054} & 0.5189 & 0.5083 \\ \cline{2-7}
        ~ & \multirow{2}{*}{15\%} & ALS & 0.3961 & 0.4520 & 0.4529 & \textbf{0.4516} \\ \cline{3-7}
        ~ & ~ & IRLS & 0.3957 & 0.4513 & 0.4526 & \textbf{0.4494} \\ \hline
        \multirow{6}{*}{2} & \multirow{2}{*}{5\%} & ALS & 0.6126 & \textbf{1.8677} & 1.8843 & 1.8770 \\ \cline{3-7}
        ~ & ~ & IRLS & 0.6136 & 1.9546 & 1.9966 & \textbf{1.9524} \\ \cline{2-7}
        ~ & \multirow{2}{*}{10\%} & ALS & 0.4648 & 0.5222 & 0.5248 & \textbf{0.5159} \\ \cline{3-7}
        ~ & ~ & IRLS & 0.4631 & 0.5220 & 0.5258 & \textbf{0.5154} \\ \cline{2-7}
        ~ & \multirow{2}{*}{15\%} & ALS & 0.3895 & 0.4470 & 0.4486 & \textbf{0.4448} \\ \cline{3-7}
        ~ & ~ & IRLS & 0.3960 & 0.4474 & 0.4488 & \textbf{0.4429} \\ \hline
    \end{tabular}
    \label{tab:res_syn_20}
\end{table}

\end{document}